\algrenewcommand\algorithmicrequire{\textbf{Input:}}
\algrenewcommand\algorithmicensure{\textbf{Output:}}
\algrenewcommand\algorithmicrequire{\textbf{Input:}}
\algrenewcommand\algorithmicensure{\textbf{Output:}}
\newtheorem{theorem}{Theorem}[subsection]
\theoremstyle{remark}
\newtheorem{definition}{Definition}[subsection]
\algnewcommand\algorithmicforeach{\textbf{for each}}
\title{Robust estimations from distribution structures: V. Non-asymptotic
}
\author[a,b,c,1]{Li Tuobang}
\keywords{finite sample bias $|$ order statistics $|$ variance reduction $|$ Monte Carlo study $|$ uniform distribution} 
\begin{abstract}
Due to the complexity of order statistics, the finite sample behaviour of robust statistics is generally not analytically solvable. While the Monte Carlo method can provide approximate solutions, its convergence rate is typically very slow, making the computational cost to achieve the desired accuracy unaffordable for ordinary users. In this paper, we propose an approach analogous to the Fourier transformation to decompose the finite sample structure of the uniform distribution. By obtaining sets of sequences that are consistent with parametric distributions for the first four sample moments, we can approximate the finite sample behavior of other estimators with significantly reduced computational costs. This article reveals the underlying structure of randomness and presents a novel approach to integrate multiple assumptions. 
\end{abstract}
\begin{document}

\maketitle
\thispagestyle{firststyle}
\ifthenelse{\boolean{shortarticle}}{\ifthenelse{\boolean{singlecolumn}}{\abscontentformatted}{\abscontent}}{}


\dropcap{I}n the early nineteenth century, Bessel deduced the unbiased sample variance and found it has a correction term of $\frac{n}{n-1}$. One century later, Cramér \cite{cramer1999mathematical} in his classic textbook \emph{Mathematical Methods of Statistics} deduced unbiased sample central moments with a linear time complexity \cite{gerlovina2019computer}. However, apart from the mean and central moments, the finite sample behavior of nearly all other estimators depends on the underlying distribution and lacks a simple non-parametric correction term. A widely used exact finite sample bias correction for unbiased standard deviation in the Gaussian distribution is the factor, $\sqrt{\frac{n-1}{2}}\frac{\Gamma\left(\frac{n-1}{2}\right)}{\Gamma\left(\frac{n}{2}\right)}$, which can be deduced using Cochran's theorem \cite{holtzman1950unbiased}. While this correction is already more complex than Bessel's correction, for robust statistics, the situation is even much more complicated. For example, the simplest robust estimator, the median, exhibits a highly complex finite sample behavior. If $n$ is odd, $E\left[median_n\right]=\int_{-\infty}^{\infty}{\left(\frac{n+1}{2}\right)\binom{n}{\frac{n}{2}-\frac{1}{2}}F\left(x\right)^{\frac{n}{2}-\frac{1}{2}}\left[1-F(x)\right]^{\frac{n}{2}-\frac{1}{2}}f(x)xdx}$ \cite{David2003OrderST}, where $F(x)$ and $f(x)$ represent the cumulative distribution function (cdf) and probability density function (pdf) of the assumed distribution, respectively. For the exponential distribution, the above equation is analytically solvable, yielding $E\left[median_n\right]=\frac{2^{-n-1} (n+1) \binom{n}{\frac{n-1}{2}} \left(H_n-H_{\frac{n-1}{2}}\right) \Gamma \left(\frac{n+1}{2}\right)\sqrt{\pi}}{\lambda  \Gamma \left(\frac{n}{2}+1\right)}$, where $H_n$ denotes the $n$th Harmonic number, $\Gamma$ represents the gamma function, and $\lambda$ stands for the scale parameter of the exponential distribution. However, for distributions with more complex pdfs, such equations are generally unsolvable. For more complex estimators, writing their exact finite-sample distribution formulas becomes challenging. In 2013, Nagatsuka, Kawakami, Kamakura, and Yamamoto derived the exact finite-sample distribution of the median absolute deviation, which consists of four cases, each with a lengthy formula \cite{NAGATSUKA2013999}. In such cases, even obtaining a numerical solution is challenging \cite{David2003OrderST,NAGATSUKA2013999}. So, Monte Carlo simulation is currently the only practical choice for estimating finite sample corrections. However, the computational cost of Monte Carlo simulation is often too high to be processed on a typical PC. For example, for median absolute deviation, Croux and Rousseeuw (1992) provided correction factors with a precision of three decimal places for $n\le 9$ using 200,000 pseudorandom Gaussian sample \cite{Croux1992TimeEfficientAF}. Hayes (2014) reported correction factors for $n\le 100$ using 1 million pseudorandom samples for each value of $n$ to ensure the accuracy to four decimal places \cite{Kevin3}. Recently, Akinshin (2022) \cite{akinshin2022finitesample} presented correction factors for $n\le 3000$ using 0.2-1 billion pseudorandom Gaussian samples. His result suggest that, for the median absolute deviation, finite sample bias correction is required to ensure a precision of three decimal places when the sample size is smaller than 2000. This highlights the importance of finite sample bias correction. However, since different correction factors are required for different parametric assumptions, the computational cost of addressing all possible cases in the real world becomes significant, especially for complex models. 

In addition to computational challenges, there exists an inherent difficulty in dealing with randomness. The theory of probability provides a framework for modeling and understanding random phenomena. However, the practical implementation of these models can be challenging, as discussed, and their complexity greatly hinders our comprehension. The quality of randomness can significantly impact the validity of simulation results, and a deeper understanding of randomness may offer a more effective and cost-efficient solution. The purpose of this article is to demonstrate that the finite sample structure of uniform random variables can be decomposed using a few well-designed sequences with high accuracy. Furthermore, we show that the computational cost of Monte Carlo study can be dramatically improved by obtaining sets of sequences that are consistent with multiple parametric distributions respectively for sample central moments.

\begin{figure*}
\centering
\includegraphics[width=1\linewidth]{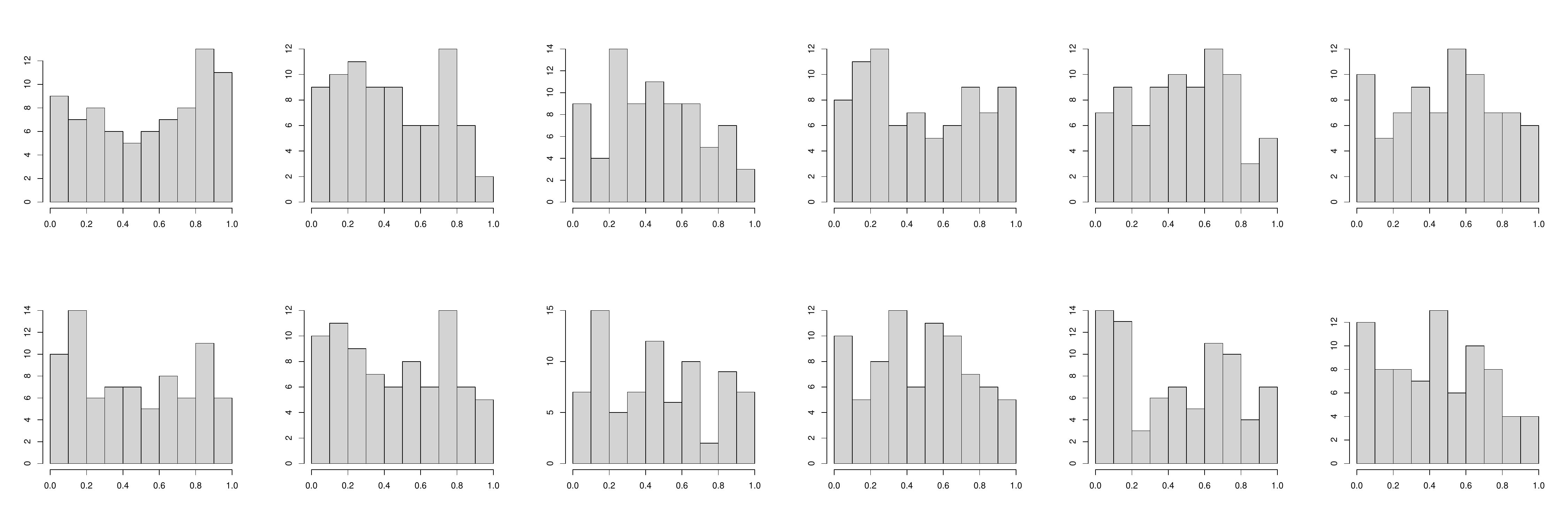}
\caption{The frequency histograms of pseudo-random sequences on the interval [0,1] with size 80.
}

\label{fig:Biasplot}
\end{figure*}

\section*{Decomposing the Finite Sample Structure of Uniform Distribution}\label{BB}
Any continuous distribution can be linked to the uniform distribution on the interval [0, 1] through its quantile function. This fundamental concept in Monte Carlo study implies that understanding the finite sample structure of uniform random variables can be leveraged to understand the finite sample structure of any other continuous random variable through the quantile transform. The Glivenko–Cantelli theorem \cite{glivenko1933sulla,cantelli1933sulla} ensures the almost-sure convergence of the empirical distribution function to the true distribution function. However, the individual empirical distribution often deviates significantly from the asymptotic distribution even when the sample size is not small (Figure \ref{fig:Biasplot}, sample size is 80), which cause finite sample biases and variance of common estimators. Let $\mu$, $\mu_2$, $\ldots$, $\mu_{\mathbf{k}}$ denote the first $\mathbf{k}$ central moments of a probability distribution. According to the unbiased sample central moment \cite{cramer1999mathematical}, the expected value of the sample central moment, $m_{\mathbf{k}}=\frac{1}{n}\sum_{{\mathbf{k}}=1}^{n}\left(x_{\mathbf{k}}-\bar{x}\right)^{\mathbf{k}}$, can be deduced, denoted as $E[m_{\mathbf{k}}]$. Let $S = \{sequence[i] | i \in \mathbb{N}\}$ be a set of number sequences ranging from 0 to 1, where $sequence[i]$ represents the $i$th sequence in the set, and $\mathbb{N}$ is the set of natural numbers, with $i\le N$. Transform every number in $S$ using the quantile function of a parametric distribution, $PD$. The transformed sequences can be denoted as $S_{PD}$. Denote the set of the $\mathbf{k}$th sample central moments for these transformed sequences as $M_{\mathbf{k}}=\{m_{\mathbf{k},i}| i \in \mathbb{N}\}$. $S$ is consistent with $PD$ for all $m_{\mathbf{k}}$ when $\mathbf{k}\le k$, if and only if the following system of linear equations is consistent, $\begin{cases} m_{1,1}w_1+\ldots+m_{1,i} w_i+\ldots+m_{1,N} w_N=E[m_{\mathbf{1}}]  \\ \ldots\\ m_{{\mathbf{k}},1} w_1+\ldots+m_{{\mathbf{k}},i} w_i+\ldots+ m_{{\mathbf{k}},N} w_N=E[m_{\mathbf{{\mathbf{k}}}}] \\ \ldots\\ m_{k,1} w_1+\ldots+m_{k,i} w_i+\ldots+m_{k,N} w_N=E[m_{k}]\\
w_1+\ldots+w_i+\ldots+w_N=1
\end{cases}$, where $w_1$, $\ldots$, $w_i$, $\ldots$, $w_N$ are the unknowns of the system, with $N\geq k+1$. $w_1$, $\ldots$, $w_i$, $\ldots$, $w_N$ can be determined using a typical constraint optimization algorithm. The Monte Carlo study can be seen as a special case when $w_1=\ldots=w_i=\ldots=w_N$, and the sequences in $S$ are all random number sequences. The strong law of large numbers (proven by Kolmogorov in 1933) \cite{kolmogorov1933sulla} ensures that in this case, when the number of sequences $N\to\infty$ or when the sample size $n\to\infty$, the above system of linear equations is always consistent. Another trivial but important condition is introduced in the following theorem.
\begin{theorem}\label{whlk} For any set of sequences, there is always a probability distribution that it is consistent with.\end{theorem}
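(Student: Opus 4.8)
The plan is to establish existence by exhibiting one distribution that works for \emph{every} set of sequences at once, rather than analysing the consistency of the linear system for a generic $PD$. Since ``consistent'' only demands that the system possess at least one solution, it suffices to find a single $PD$ for which the system collapses to something manifestly solvable. The natural candidate is a degenerate distribution, a point mass concentrated at one value $c$. Its quantile function is the constant map $Q(p)=c$, so the quantile transform sends every sequence in $S$ to the same constant vector $(c,\ldots,c)$, no matter what the original sequences were.

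With this choice both sides of the moment equations vanish. On the left, each transformed sequence is constant, so $x_j-\bar{x}=0$ entrywise and every sample central moment is zero: $m_{\mathbf{k},i}=0$ for all $\mathbf{k}\ge 1$ and all $i$ (for $\mathbf{k}=1$ this holds for any sequence whatsoever, the first central moment being identically zero). On the right, $E[m_{\mathbf{k}}]$ is the expectation of the sample central moment under $PD$; but under a point mass a random sample of size $n$ equals $(c,\ldots,c)$ almost surely, so its central moments are identically zero and hence $E[m_{\mathbf{k}}]=0$.

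Substituting these values, the first $k$ equations of the system all read $0=0$ and impose no constraint, leaving only the normalization $w_1+\ldots+w_N=1$. This single equation always has a solution, for example the Monte Carlo weights $w_1=\ldots=w_N=\frac{1}{N}$, so the system is consistent. This shows that $S$ is consistent with the point-mass $PD$, and since the argument used no property of $k$, it holds for every order $k$ simultaneously.

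The one point that genuinely needs checking is admissibility: a degenerate distribution is a bona fide probability distribution, its quantile function is well defined, and all its moments are finite (all zero), so no quantity appearing in the system is ill-posed. I expect this to be essentially the whole proof. The only way the statement could become hard is if one insisted on a \emph{non-degenerate} or continuous $PD$; then the trivial collapse disappears, and one would instead have to match the averaged sample moments $\frac{1}{N}\sum_i m_{\mathbf{k},i}$ to a nonzero $E[m_{\mathbf{k}}]$, which I would attempt through a scaling or limiting argument that contracts a spread-out distribution toward a point mass. But for the stated theorem the degenerate construction is the cleanest route.
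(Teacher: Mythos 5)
Your proof is correct, but it reaches the conclusion through a different degenerate witness than the paper. The paper's proof observes that a sequence can itself be regarded as a discrete probability distribution and declares the claim trivial by assigning weight $1$ to one sequence and $0$ to the rest, so the witnessing distribution is the empirical distribution of a single member of $S$. You instead take the witness to be a point mass at $c$, under which the quantile transform sends every sequence to the constant vector $(c,\ldots,c)$, every $m_{\mathbf{k},i}$ and every $E[m_{\mathbf{k}}]$ vanishes, and the system collapses to the normalization constraint alone. Your route is actually tighter on two points that the paper's one-line argument glosses over: (i) transforming a sequence by the quantile function of its own empirical distribution does not in general reproduce that sequence, and (ii) the right-hand sides $E[m_{\mathbf{k}}]$ are expected \emph{sample} central moments, which carry finite-sample factors (e.g.\ $\tfrac{n-1}{n}\mu_2$ for $\mathbf{k}=2$) and so do not equal the chosen sequence's own $m_{\mathbf{k},i_0}$ unless those moments vanish --- exactly what your point mass forces. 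What the paper's choice buys is a non-degenerate witness and the conceptual point, reused later in the text, that sequences and distributions are interchangeable objects. Both constructions step outside the continuous, quantile-invertible setting of the surrounding section, so if one insisted on a continuous $PD$ neither proof would survive as written; your closing remark about contracting a spread-out distribution toward a point mass is the right instinct for repairing that, but it is not needed for the theorem as stated.
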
\begin{proof} Since a sequence can be seen as a discreet probability distribution, the conclusion is trivial if assigning one weight as 1 and other weights as zeros. \end{proof}

\begin{figure*}
\centering
\includegraphics[width=1\linewidth]{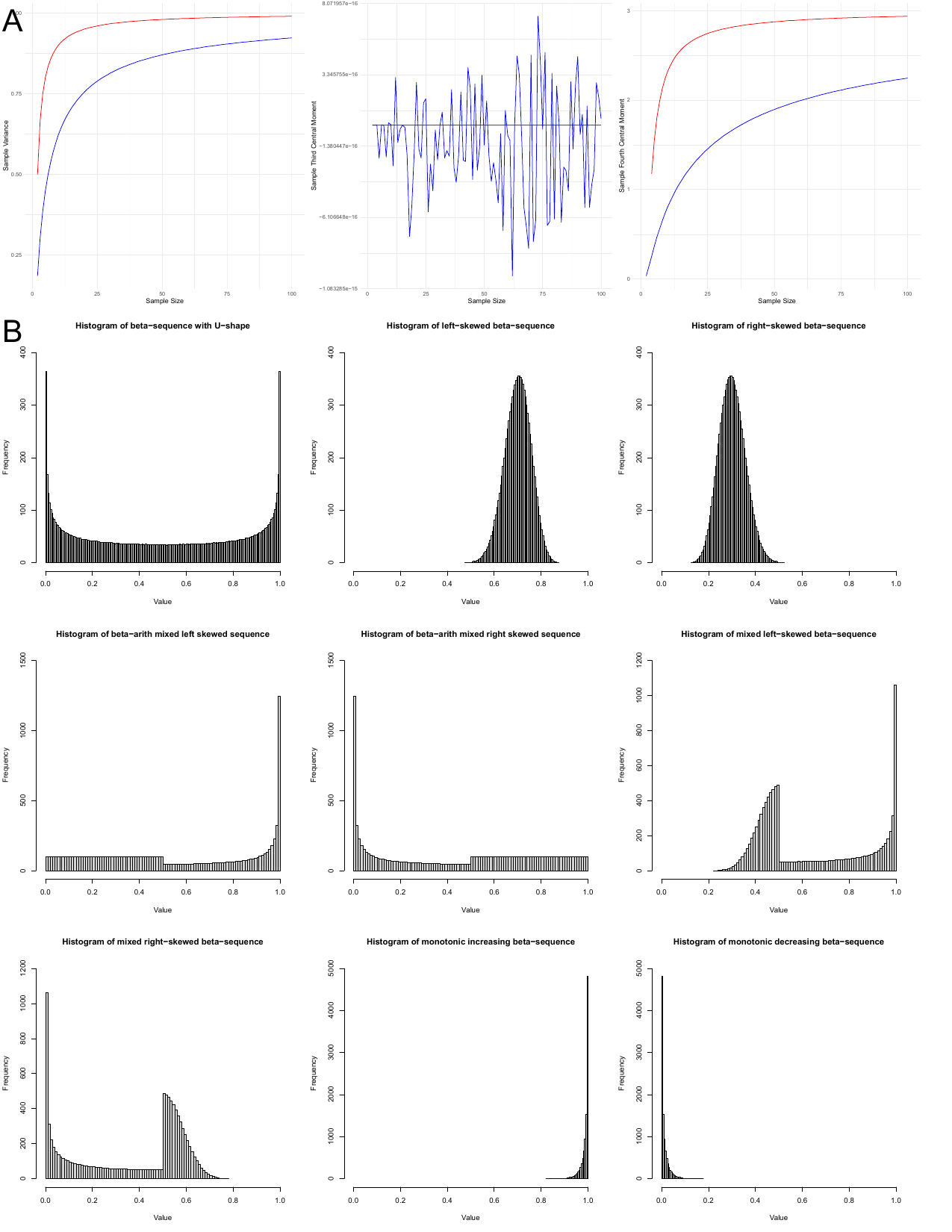}
\caption{A. The first four sample central moments for the Gaussian distribution are plotted over a sample size ranging from 5 to 100. The red lines represent the expected values, while the blue lines depict the values estimated from the arithmetic sequences. B. The histograms of different beta-sequences, their self-mixtures, and mixtures with arithmetic sequences.
}

\label{fig:2}
\end{figure*}

Low-discrepancy sequences are commonly used as a replacement of uniformly distributed random numbers to reduce computational cost. When considering a sequence to approximate the structure of uniform random variables, the most natural choice is the arithmetic sequence, denoted as $\{ x_i \}_{i=1}^{n} = \left\{ \frac{i}{n+1} \right\}_{i=1}^{n}$. As $n\to\infty$, the arithmetic sequence towards the continuous uniform distribution. However, the arithmetic central moments estimated from the arithmetic sequence for the Gaussian distribution differ significantly from their expected values (Figure \ref{fig:2}A). The arithmetic sequence lacks the variability of true random samples which produce additional biases for even order moments. The beta distribution is defined on the interval (0, 1) in terms of two shape parameters, denoted by $\alpha$ and $\beta$. When $\alpha=\beta$, the beta distribution is symmetric. To better replicate the features of uniform random variables, we introduced beta distributions with a variety of parameters. The arithmetic sequences were transformed by the quantile functions of these beta distributions to form beta-sequences, resulting in sequences that are U-shape ($\alpha=\beta=0.547$), left-skewed ($\alpha=46.761$, $\beta=20.108$), right-skewed ($\alpha=20.108$, $\beta=46.761$), monotonic decreasing ($\alpha=0.478$, $\beta=38.53$), monotonic increasing ($\alpha=38.53$, $\beta=0.478$), their left-skewed self-mixtures ($\alpha=\beta=0.369$, $\alpha=\beta=18.933$), their right-skewed self-mixtures ($\alpha=\beta=0.369$, $\alpha=\beta=18.933$), their left-skewed mixture with the arithmetic sequence ($\alpha=\beta=0.328$), their right-skewed mixture with the arithmetic sequence ($\alpha=\beta=0.328$) (Figure \ref{fig:2}B). Besides beta-sequences with a U-shape, other sequences are paired by symmetry so additional constraints is set to ensure equal weight for each pair. Besides these 9 sequences and arithmetic sequences, a pseudo-random sequence is introduced to further approximate the structure and avoid inconsistent scenarios. Finally, a complement sequence is introduced which if combining all the sequences with corresponding weights, the overall sequence is nearly uniform.

\begin{figure*}
\centering
\includegraphics[width=1\linewidth]{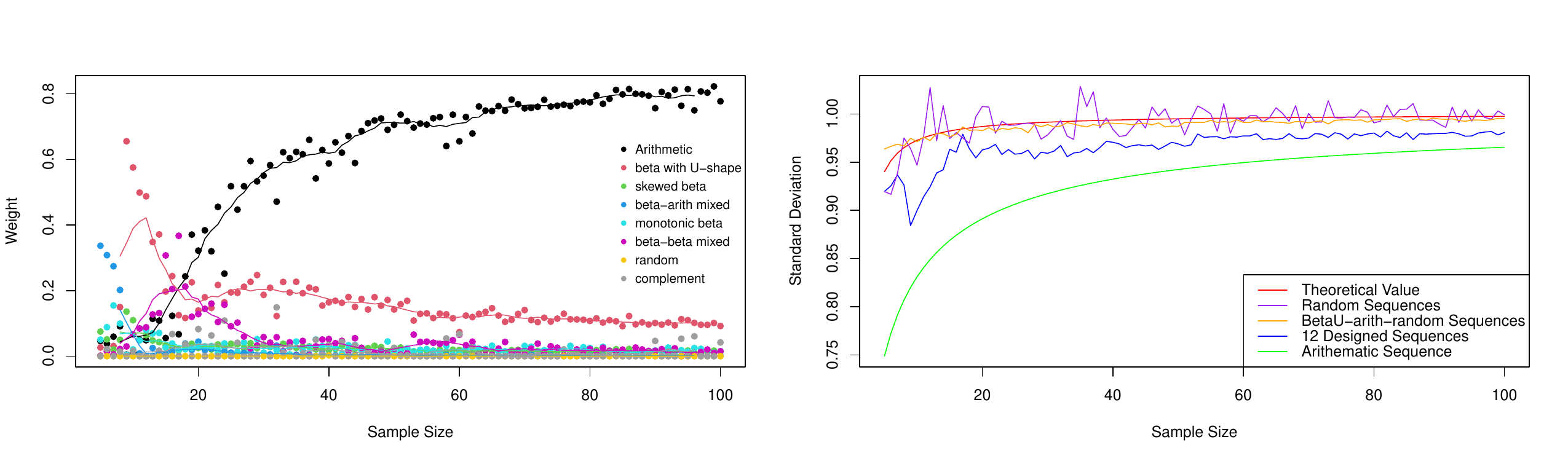}
\caption{The first plot shows the weights assigned to different sequences as the sample size increases. The second plot depicts the sample standard deviations estimated from random, beta-arith-random, 12 designed, and arithmetic sequences and compares them to the true values. The numbers of random, betaU-arith-random, and 12 designed sequences were both 120 in total.}

\label{fig:3}
\end{figure*}
\section*{Estimating Finite Sample Bias from Calibrated Sequences}\label{AA}
The most surprising result in this article is that, by carefully selecting/designing sequences in $S$, even when $N$ and $n$ are very small, e.g., less than 20, the above system of linear equations can still be consistent, while the weight assigns to the random and complement sequences are extremely small (<0.02 on average). Using just 12 sequences, when $n=10$, the constraint optimization algorithm can assign weights to all these sequences with errors less than $10^{-10}$, when $k\le4$. This means that technically, these sequences are consistent with the Gaussian distribution for the first four moments. More importantly, in the case of Gaussian distirbution, the findings suggest the arithmetic sequence and beta sequence with a U-shape accounts for more than 70\% of the finite sample properties of uniform random variables, while monotonic beta, beta-beta mixed, skewed beta distributions each contribute about 2-10\% (Figure \ref{fig:3}). As the sample size grows, as expected, the weight of the arithmetic sequence increases and dominants while the weights of other sequences gradually decrease. However, the beta-sequence with a U-shape still holds about 10\% weight even when the sample size is 100 (Figure \ref{fig:3}).

The obtained weights can be used to estimate the finite sample behaviour of other related estimators, such as the standard deviation for the Gaussian distribution. We found that by using the 12 well-designed sequences, the estimation of finite sample bias is much better than the arithmetic sequence (Figure \ref{fig:3}), but the accuracy is still poor. The RMSE is 0.0274 after repeated 10 times to reduce the variance caused by the random sequences, higher than that of the Monte Carlo study with the same size of sequences, which is 0.0127. To further increase accuracy, we adopted a stochastic method. Since the arithmetic and beta sequence with a U-shape accounts for most finite sample properties, we preserve these two sequences and then pseudo-randomly generated ten sequences to form another 12 sequences. Their efficacy in approximating the finite sample structure of uniform random variables was evaluated by solving the above system of linear equations for the first four moments. Sequences that met the predetermined accuracy threshold (error less than $10^{-10}$) were retained, while those that did not meet the requirement were discarded in favor of a new set. Upon identifying enough qualified sets, these sets were applied to assess the finite sample biases or variances in other estimators for the Gaussian distribution. The outcomes indicate that using merely 10 sets of sequences, totaling 120 sequences, which can be executed on a standard PC in a negligible amount of time, can achieve a accuracy of approximately 0.005 (root mean square error, RMSE) for the finite sample bias of standard deviation. In contrast, the RMSE is about 0.012 if using classic Monte Carlo methods with the same size of sequences.

\begin{table*}
\centering
\caption{The root mean square error of using different kinds of sequenses to estimate the finite sample behaviour of standard deviation for the Gaussian distribution}
\begin{tabular}{lllllllllll}
\hline
Bias-sd-G & Random 10S & Arithmetic & 12 Designed 10S & BAR-G 10S & BAR-G 20S & BAR-G 30S & Random 50S & BAR-G 50S & BAR-5D 50S \\ \hline
RMSE & 0.0129         & 0.0736     & 0.0274              & 0.0051      & 0.005         & 0.0047        & 0.0044         & 0.0049        & 0.0024            \\ \hline  

\bottomrule
\end{tabular}
\begin{tabular}{lllllllll}
\hline
Variance-sd-G & Random 10S & 12 Designed 10S& BAR-G 10S & BAR-G 20S & BAR-G 30S  & Random 50S& BAR-G 50S & BAR-5D 50S \\ \hline
RMSE & 0.0085 & 0.1342      & 0.0350                           & 0.0350       & 0.0340    & 0.0032    & 0.0341       &0.0307  \\ \hline
\bottomrule
\end{tabular}
\begin{tabular}{lllllllllll}
\hline
Bias-median-E & Random 10S & Arithmetic & BAR-E 10S & BAR-E 20S & BAR-E 30S & Random 50S & BAR-E 50S & BAR-5D 50S \\ \hline
RMSE & 0.0195         & 0.0213     & 0.0112              & 0.0079      & 0.0073         & 0.0074        & 0.0070         & 0.0071             \\ \hline  

\bottomrule
\end{tabular}

\label{tab:comparison}
\begin{minipage}{1\linewidth}
\footnotesize S: set; sd: standard deviation; G: Gaussian distribution; E: Exopnential distribution; 5D: five two-parametric distributions.

\end{minipage}
\end{table*}

%


There are still limitations with these ten sets of betaU-arith-random (BAR) sequences. 1, the obtained results, although much better than the 12 well-designed sequences and Monte Carlo methods, are still constantly biased (Figure \ref{fig:3}). 2, the accuracy cannot be further improved by simply increasing the sets of sequences (Table \ref{tab:comparison}). Theorem \ref{whlk} indicates that if finding sets of sequences that are consistent for other distributions, as the kinds of distributions grow, the combined sets of sequences will approach true randomness. This suggests another way to further improve the accuracy. 

Here, we calibrate 30 sets of sequences for the Weibull ($k=2$, $k=5$), gamma ($k=1$), lognormal ($\sigma=0.25$, $\sigma=0.5$), Pareto ($\alpha=7$, $\alpha=10$, $\alpha=15$), and generalized Gaussian distributions ($\beta=2$, $\beta=4$), we found that, by combing these 360 sequences with 20 sets of sequences calibrated from the Gaussian distribution, the accuracy can be reach 0.0024 (RMSE) for the finite sample bias of standard deviation in the Gaussian distribution. In contrast, the RMSE is about 0.0049 in classic Monte Carlo methods with the same size of sequences (Table \ref{tab:comparison}). 

Besides the standard deviation, theoretically, this method can be used for any estimator, another example we provided here is the median for the exponential distribution. Besides the calibration to the Gaussian distribution changed to the exponential distribution, other procedures are the same. The results show similar performance as the standard deviation for the Gaussian distribution when the number of sequences is 120, indicating the universality of this approach (Figure \ref{fig:4}, Table \ref{tab:comparison}).

\begin{figure*}
\centering
\includegraphics[width=1\linewidth]{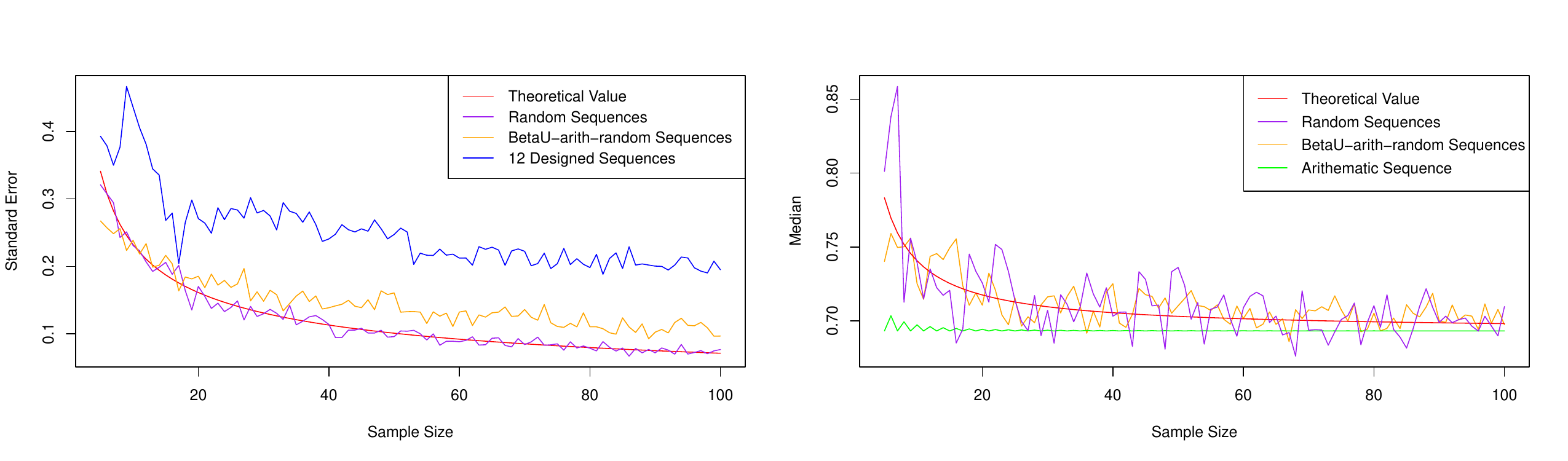}
\caption{The first plot shows the estimations of the standard errors of standard deviation for the Gaussian distribution by using different sequences as the sample size increases. The second plot depicts the sample median estimated from random, beta-arith-random, and arithmetic sequences and compares them to the true values. The numbers of random and betaU-arith-random sequences were both 120 in total.}

\label{fig:4}
\end{figure*}

\section*{Variance}\label{2}
The calibrated sequences can also be used to estimate the variance of sample standard deviation for the Gaussian distribution (Figure \ref{fig:4}, Table \ref{tab:comparison}). However, the accuracy is much lower than the estimation of finite sample bias and the Monte Carlo approach. When considering the variance of median for the exponential distribution, there is another concern. As one of the fundamental theorems in statistics, the central limit theorem declares that the standard deviation of the limiting form of the sampling distribution of the sample mean is $\frac{\sigma}{\sqrt n}$. The principle, asymptotic normality, was later applied to the sampling distributions of robust location estimators \cite{newcomb1886generalized, daniell1920observations,mosteller1946some,rao1952advanced,bickel1967some,chernoff1967asymptotic,lecam1970assumptions,bickel2012descriptive1,bickel2012descriptive2,janssen1984asymptotic}. Daniell (1920) stated \cite{daniell1920observations} that comparing the efficiencies of various kinds of estimators is useless unless they all tend to coincide asymptotically. However, the median and mean do not coincide asymptotically for the exponential distribution. Bickel and Lehmann, in 1976 \cite{bickel2012descriptive1,bickel2012descriptive2}, argued that meaningful comparisons of the efficiencies of various kinds of location estimators can be accomplished by studying their standardized variances, asymptotic variances, and efficiency bounds. Standardized variance, $\frac{\text{Var}\left(\hat{\theta}\right)}{\theta^2}$, allows the use of simulation studies or empirical data to compare the variances of estimators of distinct parameters. However, a limitation of this approach is the inverse square dependence of the standardized variance on $\theta$. If $\text{Var}\left(\hat{\theta}_1\right)=\text{Var}\left(\hat{\theta}_2\right)$, but $\theta_1$ is close to zero and $\theta_2$ is relatively large, their standardized variances will still differ dramatically. Here, the scaled standard error (SSE) is proposed as a method for estimating the variances of estimators measuring the same attribute, offering a standard error more comparable to that of the sample mean and much less influenced by the magnitude of $\theta$.

\begin{definition}[Scaled standard error]Let $\mathcal{M}_{s_i s_j}\in \mathbb{R}^{i\times j}$ denote the sample-by-statistics matrix, i.e., the first column corresponds to $\widehat{\theta}$, which is the mean or a $U$-central moment measuring the same attribute of the distribution as the other columns, the second to the $j$th column correspond to $j-1$ statistics required to scale, ${\widehat{\theta_{r_1}}}$, ${\widehat{\theta_{r_2}}}$, $\ldots$, ${\widehat{\theta_{r_{j-1}}}}$. Then, the scaling factor ${\mathcal{S}=\left[1,\frac{\bar{\theta_{r_1}}}{\bar{\theta_m}},\frac{\bar{\theta_{r_2}}}{\bar{\theta_m}},\ldots,\frac{\bar{\theta_{r_{j-1}}}}{\bar{\theta_m}}\right]}^T$ is a $j\times1$ matrix, which $\bar{\theta}$ is the mean of the column of $\mathcal{M}_{s_i s_j}$.  The normalized matrix is $\mathcal{M}_{s_i s_j}^N=\mathcal{M}_{s_i s_j}\mathcal{S}$. The SSEs are the unbiased standard deviations of the corresponding columns of $\mathcal{M}_{s_i s_j}^N$. \end{definition}

The $U$-central moment (the central moment estimated by using $U$-statistics) is essentially the mean of the central moment kernel distribution, so its standard error should be generally close to $\frac{\sigma_{\mathbf{k}m}}{\sqrt n}$, although not exactly since the kernel distribution is not i.i.d., where $\sigma_{\mathbf{k}m}$ is the asymptotic standard deviation of the central moment kernel distribution. If the statistics of interest coincide asymptotically, then the standard errors should still be used, e.g, for symmetric location estimators and odd ordinal central moments for the symmetric distributions, since the scaled standard error will be too sensitive to small changes when they are zero. The scaled standard error of median is higher than that of the mean for the exponential distribution, while their standard errors are nearly identical (Figure \ref{fig:5}).

\begin{figure}
\includegraphics[width=1\linewidth]{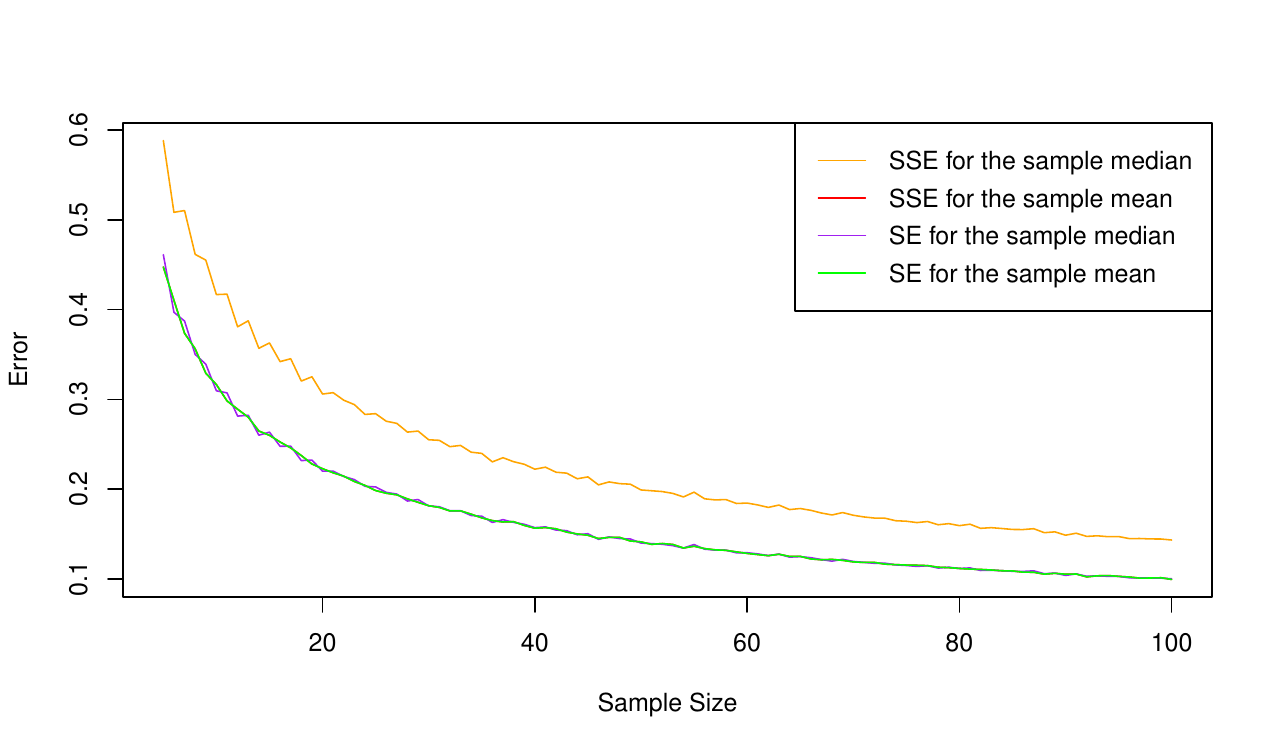}
\caption{The standard errors and scale standard errors of sample median and sample mean with respect to the sample size.}

\label{fig:5}
\end{figure}



\showmatmethods{} 
\section*{Data and Software Availability} All data are included in the main text. All codes have been deposited in \href {https://github.com/johon-lituobang/REDS_Nonasymptotic} {github.com/johon-lituobang/REDS}. 
\acknow{I sincerely acknowledge the insightful comments from Peter Bickel, which considerably elevating the lucidity and merit of this paper.}

\showacknow{} 

\bibliography{pnas-sample}

\end{document}